\title{Simple and Optimal Greedy Online Contention Resolution Schemes}
\author{%
  Vasilis Livanos\thanks{Dept.\ of Computer Science, University of Illinois at Urbana-Champaign, IL 61801. {\tt livanos3@illinois.edu}.
    Work on this paper partially supported by NSF grant CCF-1910149.}
}
\date{\today}
\begin{document}\maketitle

\begin{abstract}
Real-world problems such as ad allocation and matching have been extensively studied under the lens of combinatorial optimization. In several applications, uncertainty in the input appears naturally and this has led to the study of online stochastic optimization models for such problems. For the offline case, these constrained combinatorial optimization problems have been extensively studied, and Contention Resolution Schemes (CRSs), introduced by Chekuri, Vondr\'{a}k, and Zenklusen, have emerged in recent years as a general framework to obtaining a solution. The idea behind a CRS is to first obtain a fractional solution to a (continuous) relaxation of the objective and then round the fractional solution to an integral one. When the order of rounding is controlled by an adversary, Online Contention Resolution Schemes (OCRSs) can be used instead, and have been successfully applied in settings such as prophet inequalities and stochastic probing.

In this work, we focus on {\em greedy} OCRSs, which provide guarantees against the strongest possible adversary, an {\em almighty} adversary. Intuitively, a greedy OCRS has to make all its decisions before the online process starts. We present simple $\nicefrac{1}{e}$ - selectable greedy OCRSs for the single-item setting, partition matroids and transversal matroids, which improve upon the previous state-of-the-art greedy OCRSs of \cite{ocrs} for these constraints. We also show that our greedy OCRSs are optimal, even for the simple single-item case.
\end{abstract}

\medskip
\noindent
{\small \textbf{Keywords:}
contention resolution schemes, online algorithms, matroids
}

\thispagestyle{empty}

\pagenumbering{Alph}
\newpage

\pagenumbering{arabic}


\section{Introduction}\label{sec:introduction}

In recent years, problems in Bayesian and stochastic online optimization have
attracted significant interest, especially in the field of machine learning.
In this setting, we are usually asked to make decisions in an online manner,
based on the information available to us so far, and our objective is to 
minimize our ``regret'', which is captured by a loss function and describes how much better we could have done if we had all the information available a priori.

In several applications, in which our decision relies also on hidden information, such as deciding whether a user will click an ad they are presented with \cite{ad1, ad2, ad3, ad4} or whether a kidney donor is a good match for another patient \cite{kidney1, kidney2}, our online decision problem naturally becomes stochastic. The inherent uncertainty of such applications makes the study of stochastic optimization all the more significant, and demonstrating algorithms with reasonable competitive ratios when one has knowledge of the underlying distributions illustrates the importance of learning distributions in AI.

One can consider a simpler setting in which our decision at each time step is a binary choice. In one such formulation, elements arrive in an online manner and we have to select a subset of them, subject to certain combinatorial constraints. The simplest example of such a constraint is when we only want to select a single element. Under this formulation, the elements reveal one after the other in an online manner whether they are ``active'' or not, and our binary decision is whether to select an active element, with the constraint of being able to select at most one active element. In this paper, we study this problem and its generalizations. Famous examples of such settings are prophet inequality problems \cite{kren-such, klein-wein, ezra} and secretary problems \cite{dynkin}, which have found application in the design of posted-price mechanisms and auctions, among others.

One approach that has seen plenty of success for these problems is to use the known distributional information to obtain a continuous relaxation of the objective. One can then solve this relaxation and get an optimal fractional solution $x^*$, which corresponds to the marginals of the elements under the optimal distribution. Thus, the optimal value of the relaxation constitutes an upper bound to the performance of any online (or even offline) algorithm. Afterwards, $x^*$ is used to devise an online algorithm in order to maximize the value of the subset of elements selected. It is easy to see that this algorithm essentially corresponds to an online rounding procedure for $x^*$.

In this paper, we contribute provably optimal algorithms for the problem of
rounding an optimal solution to a linear program in an online manner, for a range of fundamental constraints. The online nature of our rounding is well-motivated in the field of AI due to the fact that we are not able to control the arrival order of the agents. Our algorithms apply for the single-item setting, for partition matroids, as well as transversal matroids which have been used extensively to model matching markets \cite{trans1, trans2, trans3}.

\subsection{Online Contention Resolution Schemes}

Such rounding algorithms have recently been used to obtain several optimal and interesting results \cite{crs, ocrs, rocrs, ezra, bruggmann-rico, rs, chek-liv}, and have more applications in online mechanism design and posted pricing mechanisms \cite{chawla10, haji}. General rounding algorithms for offline problems are called \emph{Contention Resolution Schemes (CRSs)} and were introduced by Chekuri, Vondr\'{a}k and Zenklusen \cite{crs} with the purpose of maximizing a submodular function. A CRS is defined with respect to a constraint family. Examples of such combinatorial constraints include selecting an independent set in a given matroid, selecting a feasible matching in a given a graph in which the elements correspond to edges, or selecting a feasible set of elements subject to a knapsack constraint, where each element is associated with a size. Chekuri, Vondr\'{a}k and Zenklusen \cite{crs} gave the first CRSs for all aforementioned constraint families as well as other constraints. For a given fractional point $x^*$, the main idea behind CRSs is to first obtain a random set $R$, drawn from the product distribution with marginals $x^*$, hence called the \emph{active elements}. Since $R$ may be infeasible with respect to the constraints, the CRS proceeds to ``drop'' specific elements from $R$ and obtain a new, feasible, set $R' \subseteq R$.

While the general applicability of the CRS approach is remarkable, they are unfortunately not useful for Bayesian and stochastic online optimization problems. In particular, one can utilize CRSs when they have the ability to choose the order in which they obtain information about the underlying ground set of elements, as CRSs round a fractional point $x^*$ in a particular order to obtain a feasible solution.

To overcome the inherently offline nature of CRSs, Feldman, Svensson and Zenklusen \cite{ocrs} introduced the notion of \emph{Online Contention Resolution Schemes (OCRSs)}, applicable in a variety of online settings in Bayesian and stochastic online optimization, such as prophet inequalities \cite{lee-singla,rs, chek-liv}, stochastic probing \cite{ward, gupta1, gupta2, gupta3}, and posted pricing mechanisms \cite{haji}. Surprisingly, OCRSs yield constant-factor competitive ratios for several interesting feasibility constraints.

All of the results presented in \cite{ocrs} are based on a special subclass of OCRSs called \emph{greedy OCRSs}. Intuitively, a greedy OCRS fixes a downward-closed subfamily of feasible sets $\cF$ before the online process starts. During the online process, the greedy OCRS maintains a subset $S$ of the elements which is feasible in $\cF$, and then greedily accepts any active element $i$ if $S \cup \{i\}$ is also feasible in $\cF$, i.e. if $i$ does not violate feasibility, with respect to $\cF$, of the set maintained by the greedy OCRS. One can easily see that the final set at the end of the online process is feasible by construction.

Even though greedy OCRSs offer suboptimal performance guarantees with respect to (non-greedy) OCRSs, as we will see, their study remains interesting for two important reasons. First, greedy OCRSs are inherently simpler than their non-greedy counterparts. Usually, to obtain an optimal non-greedy OCRS for a non-trivial constraint, one has to use linear programming duality, as in the approach of \cite{lee-singla}. This leads to a non-intuitive algorithm which, in many situations, can be difficult to implement\footnote{For example if the OCRS is used for an application in which one has to account for the strategic behaviour of the agents.}. In short, greedy OCRSs are simpler to implement and more intuitive.

Furthermore, greedy OCRSs provide guarantees against an {\em almighty adversary} who has knowledge of the future as well as any random coins used by the algorithm. This property is crucial for applications that require the algorithm to compare against an almighty adversary. One such example is \cite{dughmi-pandora} in which the authors study the "delegation gap" of the generalized Pandora's box problem and in fact reduce the problem to the design of an OCRS which is necessarily greedy. To the best of our knowledge, this result is the first example of an application in which non-greedy OCRSs cannot be applied and a greedy OCRS is needed.

\subsection{Our contributions}

In this paper, we analyze the performance of greedy OCRSs and provide the first provably optimal greedy OCRS for the single-item setting, partition matroids and transversal matroids.

We have four main contributions:
\begin{itemize}
    \item We design a $\nicefrac{1}{e}$-selectable greedy OCRS for the single-item setting (Theorem~\ref{thm:ocrs-e}).
    \item We show that our greedy OCRS extends naturally to partition matroids\footnote{A partition matroid consists of a partition of the elements into disjoint sets $A_1, \dots, A_k$ such that a subset of the elements $S$ is independent if and only if $|S \cap A_j| \leq 1$ for every $1 \leq j \leq k$.} (Corollary~\ref{cor:ocrs-e-partition}).
    \item We proceed to show that no greedy OCRS can be $\prn{\nicefrac{1}{e} + \eps}$-selectable, for any $\eps > 0$, even for the single item setting. This, combined with our first contribution, shows that our $\nicefrac{1}{e}$-selectable greedy OCRS is the best possible (Theorem~\ref{thm:tightness}).
    \item We extend our greedy OCRS to transversal matroids\footnote{A transversal matroid consists of a bipartite graph $G = (A \cup B, E)$, in which the set of elements is $A$ and a set $S \subseteq A$ is independent if and only if there exists a matching in $G$ that covers $S$. } as well, and show that the selectability can be increased to $1 - \nicefrac{1}{e}$ for special cases of transversal matroids (Theorem~\ref{thm:ocrs-e-transversal}).
\end{itemize}
Our results improve upon the $1/4$-selectable OCRSs of \cite{ocrs} for all the constraints discussed here.

As a corollary, our work presents the first instance of a dichotomy between the best possible guarantees by greedy OCRSs and (non-greedy) OCRSs, since a $\nicefrac{1}{2}$ (non-greedy) OCRS is known for the single-item setting \cite{Alaei14}.

We proceed with our four main results. The proof of the following theorem is found in Section~\ref{sec:e-ocrs}.

\begin{theorem}\label{thm:ocrs-e}
There exists a $\nicefrac{1}{e}$-selectable (randomized) greedy OCRS for the single-item setting.
\end{theorem}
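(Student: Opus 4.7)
The plan is to construct the randomized greedy OCRS by drawing a candidate set $T \subseteq [n]$ once, before the online process begins, including each element $i$ independently with probability
$$q_i = \frac{1-e^{-x_i^*}}{x_i^*},$$
and then accepting the first active element that lies in $T$. Note that $q_i \in (0,1]$ since $1-e^{-x} \leq x$ for all $x \geq 0$ (at $x_i^* = 0$ the value is irrelevant and can be taken as the limit $1$). Drawing $T$ in advance and committing to the downward-closed family $\cF_T = \{\emptyset\} \cup \{\{i\} : i \in T\}$ yields a legitimate randomized greedy OCRS against an almighty adversary, because both $\cF_T$ and all internal randomness of the scheme are fixed before any element arrives.

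To establish $\nicefrac{1}{e}$-selectability, fix an element $i$ and an arbitrary arrival order chosen by the adversary. Since element activities are mutually independent and independent of $T$, and since the scheme accepts $i$ exactly when $i \in T$ and no element of $T$ arriving earlier is active,
$$\Pr[i \text{ selected} \mid i \text{ active}] \;=\; q_i \prod_{j \prec i}\left(1 - q_j x_j^*\right).$$
Each factor $1-q_j x_j^*$ lies in $[0,1]$, so this quantity is minimized when the adversary places $i$ last; thus it suffices to lower-bound $q_i \prod_{j \neq i}(1-q_j x_j^*)$.

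The key observation is that the chosen $q_j$ is designed precisely so that $1 - q_j x_j^* = e^{-x_j^*}$, which makes the product telescope into $e^{-\sum_{j \neq i} x_j^*}$. Combined with the single-item feasibility constraint $\sum_j x_j^* \leq 1$, this gives
$$q_i \prod_{j \neq i}\left(1-q_j x_j^*\right) \;\geq\; \frac{1-e^{-x_i^*}}{x_i^*}\cdot e^{-(1-x_i^*)} \;=\; \frac{e^{x_i^*}-1}{x_i^*}\cdot \frac{1}{e} \;\geq\; \frac{1}{e},$$
where the last step uses $e^x - 1 \geq x$ for $x \geq 0$. Since this bound holds for every $i$ and every order, the scheme is $\nicefrac{1}{e}$-selectable.

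The only nontrivial step is guessing the right $q_i$: setting $q_i$ to a single constant across elements recovers only the $\nicefrac{1}{4}$-selectability of \cite{ocrs}, so a marginal-dependent choice is essential. The clean form $q_i = (1-e^{-x_i^*})/x_i^*$ is motivated by wanting the worst-case product $\prod_{j \neq i}(1-q_j x_j^*)$ to collapse to $e^{-\sum_{j \neq i} x_j^*}$, which immediately suggests matching each factor to $e^{-x_j^*}$. Once this identification is made, the analysis reduces to the elementary inequality $e^x \geq 1+x$, so the main ``difficulty'' is really just identifying the rounding probability; no technical obstacle remains after that.
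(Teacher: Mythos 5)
Your proposal is correct, but it is not the paper's main argument: the paper's Section~3 scheme includes each element $e_i$ in the feasible family with probability $q_i = 1 - x_i/2$ and accepts the first active element in the family, and its analysis goes through a separate logarithmic inequality (Lemma~\ref{lem:logs}, i.e.\ $\ln(1-a/2) \geq -a$ and $\ln(1-a+a^2/2) \geq -a$) to bound the product of survival probabilities by $e^{-\sum_j x_j} \geq e^{-1}$. Your choice $q_i = (1-e^{-x_i})/x_i$ is exactly the alternate scheme attributed to Vondr\'{a}k in the paper's appendix: it is engineered so that each survival factor is exactly $1 - q_j x_j = e^{-x_j}$, the product telescopes, and the bound reduces to the elementary inequality $e^{x} \geq 1 + x$, giving $\frac{e^{x_i}-1}{e\,x_i} \geq \frac{1}{e}$ (the appendix phrases this as minimizing $(e^{x_i-1}-e^{-1})/x_i$ as $x_i \to 0$, which is the same estimate). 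So your route buys a cleaner, exact computation with no auxiliary lemma, while the paper's main scheme buys simpler (linear) inclusion probabilities --- indeed the paper's own remark observes that $1 - x_i/2$ is precisely a linear approximation of your $q_i$, with the analysis then resting on convexity-type inequalities. Your handling of the adversary is also sound: fixing $T$ (equivalently $\cF_{\pi,x}$) before the online process and noting that the worst case is $i$ arriving last matches the formal selectability definition, which quantifies over all feasible subsets of active elements.
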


Next, we extend the single-item greedy OCRS to a partition matroid constraint, by decomposing the partition matroid into single-item instances, running the greedy OCRS above and accepting an active element if and only if it is independent in the corresponding single-item instance of the decomposition.

\begin{corollary}\label{cor:ocrs-e-partition}
There exists a $\nicefrac{1}{e}$-selectable (randomized) greedy OCRS for partition matroids.
\end{corollary}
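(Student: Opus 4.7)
The plan is to decompose the partition matroid into its disjoint parts and apply the single-item greedy OCRS of Theorem~\ref{thm:ocrs-e} independently on each part. Let $A_1, \dots, A_k$ denote the parts of the partition matroid, and let $x^* \in [0,1]^E$ be a point in the matroid polytope. Since $x^*$ lies in the polytope, we have $\sum_{i \in A_j} x^*_i \leq 1$ for every $j$, so the restriction of $x^*$ to each $A_j$ constitutes a valid single-item instance on ground set $A_j$. For each such instance, invoke Theorem~\ref{thm:ocrs-e} to obtain a $\nicefrac{1}{e}$-selectable greedy OCRS $\pi_j$, and let $\cF_j \subseteq 2^{A_j}$ be the downward-closed feasibility family that $\pi_j$ fixes before the online process.

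Next, I would combine the per-part schemes into a single greedy OCRS for the partition matroid by taking the global family
\[
\cF \;=\; \{\, S \subseteq E : S \cap A_j \in \cF_j \text{ for every } 1 \leq j \leq k \,\},
\]
which is downward-closed because each $\cF_j$ is. When an active element $i \in A_j$ arrives, we accept it exactly when $\pi_j$ would, i.e.\ when its addition to the currently maintained subset of $A_j$ stays in $\cF_j$. Because the parts are disjoint, the decisions on distinct parts do not interact, so this combined rule is equivalent to running the $k$ schemes in parallel and is therefore a valid greedy OCRS with respect to $\cF$. Feasibility in the partition matroid is immediate: every $S \in \cF_j$ satisfies $|S| \leq 1$ in the single-item setting, so any $S \in \cF$ has $|S \cap A_j| \leq 1$ for every $j$ and is independent.

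For selectability, fix an arbitrary element $i \in A_j$ and condition on it being active. The event that $i$ is accepted depends only on the activations of the elements of $A_j$ and the internal randomness of $\pi_j$, since elements outside $A_j$ never affect $\cF_j$-feasibility. Theorem~\ref{thm:ocrs-e} guarantees this conditional acceptance probability is at least $\nicefrac{1}{e}$, giving the desired $\nicefrac{1}{e}$-selectability. The argument is a routine reduction and poses no real obstacle; the only point requiring care is verifying that the constructed scheme is genuinely greedy in the sense of the paper, which follows directly from the disjointness of the parts and the downward-closedness of each $\cF_j$.
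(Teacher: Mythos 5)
Your proof is correct and follows essentially the same route as the paper, which obtains the corollary by decomposing the partition matroid into disjoint single-item instances, running the Theorem~\ref{thm:ocrs-e} scheme on each part, and accepting an active element exactly when it is feasible in its own part's instance. Your write-up simply makes explicit the product feasibility family and the per-part independence of the selectability analysis, which the paper leaves as a one-sentence remark.
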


We complement the results above by also showing that it is tight. The proof of the following theorem can be found in Section~\ref{sec:tightness}.

\begin{theorem}\label{thm:tightness}
For every $\eps > 0$, there exists no greedy OCRS for the single-item setting that selects an active element $i$ with probability at least $\nicefrac{1}{e} + \eps$ for all $i \in \cN$.
\end{theorem}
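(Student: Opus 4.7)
The plan is to combine a symmetric hard instance with an averaging argument across all elements. I consider the instance on $\cN = [n]$ with $x_i^* = \nicefrac{1}{n}$ for every $i \in [n]$, so $\sum_i x_i^* = 1$. In the single-item setting, any downward-closed subfamily $\cF$ of independent sets is determined by which singletons it contains, so a randomized greedy OCRS is fully described by a random ``acceptable'' set $T \subseteq [n]$ fixed before the arrivals; during the process it selects an active $i$ exactly when $i \in T$ and no earlier element of $T$ has been active.

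For each $i \in [n]$, I consider the almighty adversary $A_i$ that reveals $i$ in the last arrival position (the selectability definition permits the adversary to depend on $i$, and in fact on $T$). Conditional on $i$ being active, the OCRS selects $i$ precisely when $i \in T$ and every other element of $T$ turns out to be inactive, so independence of the activities yields
\[
\Pr[i \text{ selected} \mid i \text{ active}, A_i] \;=\; \mathbb{E}_T\!\bigl[\mathbf{1}[i \in T]\,(1-\nicefrac{1}{n})^{|T|-1}\bigr].
\]
If the greedy OCRS were $(\nicefrac{1}{e}+\eps)$-selectable, this left-hand side would be at least $\nicefrac{1}{e} + \eps$ for every $i$. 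Summing over $i$ and interchanging with the expectation,
\[
n\bigl(\nicefrac{1}{e} + \eps\bigr) \;\le\; \mathbb{E}_T\!\bigl[|T|\,(1-\nicefrac{1}{n})^{|T|-1}\bigr] \;\le\; \max_{0 \le k \le n} k\,(1-\nicefrac{1}{n})^{k-1}.
\]

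A routine calculus computation (the derivative of $k(1-\nicefrac{1}{n})^{k-1}$ vanishes at $k^\star = -1/\ln(1-\nicefrac{1}{n}) = n - \tfrac{1}{2} + o(1)$ with maximum value $\nicefrac{n}{e} + O(1)$) shows that the right-hand side divided by $n$ equals $\nicefrac{1}{e} + O(\nicefrac{1}{n})$. Choosing $n = n(\eps)$ sufficiently large therefore produces a contradiction and completes the proof.

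The main obstacle I anticipate is conceptual rather than technical: one must correctly formalize a randomized greedy OCRS as a distribution over acceptable sets $T$, and ensure that the selectability definition lets us pick a different worst-case arrival order for each element $i$ (and even for each realization of $T$) when we evaluate the per-element selectability bounds. Once this framing is in place, the symmetric instance together with the one-line averaging inequality above is essentially the entire argument.
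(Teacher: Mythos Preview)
Your proposal is correct and follows essentially the same route as the paper: the symmetric instance $x_i = \nicefrac{1}{n}$, the identification of a randomized greedy OCRS with a distribution over ``acceptable'' sets $T$, the per-element adversary that places $i$ last to obtain $\mathbb{E}_T\bigl[\mathbf{1}[i\in T](1-\nicefrac{1}{n})^{|T|-1}\bigr]$, and the averaging over $i$ to reduce to bounding $\max_k k(1-\nicefrac{1}{n})^{k-1}$. The only cosmetic difference is in that last step: since $|T|\in\{0,\dots,n\}$ you can skip the calculus and simply observe that $k(1-\nicefrac{1}{n})^{k-1}$ is nondecreasing on integers $k\le n$, so the maximum is exactly $n(1-\nicefrac{1}{n})^{n-1}$, giving the clean per-element bound $(1-\nicefrac{1}{n})^{n-1}\to \nicefrac{1}{e}$ that the paper states.
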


Finally, we extend Theorem~\ref{thm:ocrs-e} to a more general class of matroids, transversal matroids, and strengthen it for the special case in which every element's neighborhood has size at least $3$. The proof of the following theorem is found in Section~\ref{sec:extension}.

\begin{theorem}\label{thm:ocrs-e-transversal}
Let $\cM = (U, \cI)$ be a transversal matroid represented by a bipartite graph $G = (U \cup V, E)$.
Then, there exists a $\nicefrac{1}{e}$-selectable (randomized) greedy OCRS $\pi$ for $\cM$. Furthermore, if
for every element $u \in U$ we have $|N(u)| \geq 3$, where $N(u)$ is the set of neighbours of $u$
in $G$, then $\pi$ is a $\prn{1 - \nicefrac{1}{e}}$-selectable (randomized) greedy OCRS for $\cM$.
\end{theorem}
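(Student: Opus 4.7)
The plan is to reduce the transversal-matroid OCRS problem to the partition-matroid OCRS of Corollary~\ref{cor:ocrs-e-partition}, using the fractional-matching decomposition of the transversal-matroid polytope $P_\cM$. A standard maximum-flow argument shows that any $x \in P_\cM$ admits nonnegative coefficients $y_{uv}$ supported on the edges of $G$ with $x_u = \sum_{v \in N(u)} y_{uv}$ for every $u \in U$ and $\sum_{u \in N(v)} y_{uv} \leq 1$ for every $v \in V$; the vector $y$ is a fractional matching certifying feasibility of $x$.

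For the $\nicefrac{1}{e}$-selectable part, I would proceed as follows. Before the online process starts, independently for each $u \in U$ I draw a pre-assignment $\sigma(u) \in N(u)$ with $\Pr[\sigma(u) = v] = y_{uv}/x_u$, and set the (randomized) downward-closed family to
\[
\cF = \{\, S \subseteq U : \sigma \text{ is injective on } S \,\}.
\]
Because an injective restriction of $\sigma$ to $S$ yields a matching of $S$ in $G$, we have $\cF \subseteq \cI$, so $\pi$ always produces an independent set of $\cM$. Crucially, the scheme then decomposes into independent single-item sub-OCRSs, one per slot: the stream of elements $u$ with $\sigma(u) = v$ arrives with marginal probability $y_{uv}$ and total load $\sum_{u \in N(v)} y_{uv} \leq 1$, so Theorem~\ref{thm:ocrs-e} applies. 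Averaging the per-slot $\nicefrac{1}{e}$ guarantee over $\sigma(u)$ with weights $y_{uv}/x_u$ gives
\[
\Pr[u \text{ selected} \mid u \text{ active}] \;\geq\; \sum_{v \in N(u)} \frac{y_{uv}}{x_u} \cdot \frac{1}{e} \;=\; \frac{1}{e} ,
\]
which is the desired selectability.

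For the $\prn{1 - \nicefrac{1}{e}}$-strengthening when $|N(u)| \geq 3$ for every $u$, the idea is to grant $u$ multiple chances rather than pre-commit to one slot. Concretely, I would run, for each slot $v$ independently, the single-item subroutine of Theorem~\ref{thm:ocrs-e} on the full stream of neighbors of $v$, so that $v$ is still available when $u$ arrives with probability at least $\nicefrac{1}{e}$; then $u$ is accepted as soon as any $v \in N(u)$ is still available at that moment. Provided the per-slot availability events across $v \in N(u)$ can be decoupled, the failure probability for $u$ is at most $\prn{1 - \nicefrac{1}{e}}^{|N(u)|} \leq \prn{1 - \nicefrac{1}{e}}^3 \leq \nicefrac{1}{e}$, which produces the claimed $\prn{1 - \nicefrac{1}{e}}$-selectability.

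The main obstacle lies entirely in this second part. The events ``$v$ is available when $u$ arrives'' across $v \in N(u)$ are \emph{not} genuinely independent: an earlier element $u'$ with $N(u') \cap N(u) \neq \emptyset$ can simultaneously consume several of $u$'s neighbors, inducing positive correlation between blockings. Overcoming this will likely require coupling the true process to one with independent per-slot availabilities---either by inflating the per-slot marginals to absorb the correlation, or by invoking a Hall-type argument on the residual bipartite graph restricted to $N(u)$ to verify that each $u$ nonetheless enjoys at least three effectively independent chances. Making this coupling rigorous, and confirming that it sacrifices no more than the stated $\nicefrac{1}{e}$ factor, is where I expect the bulk of the technical work to lie; it is also precisely where the hypothesis $|N(u)| \geq 3$ is used.
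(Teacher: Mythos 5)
Your first part (the $\nicefrac{1}{e}$ bound) is essentially correct, but it is a genuinely different argument from the paper's. The paper never decomposes $x$ into a fractional matching: for each $v \in V$ it draws a static random set $R_v \subseteq N(v)$ containing each $u$ independently with probability $q_u = 1 - \prn{1 - \frac{1-e^{-x_u}}{x_u}}^{\nicefrac{1}{|N(u)|}}$, declares $S$ feasible in $\cF_{\pi,x}$ iff the sets $R_v$ admit a system of distinct representatives covering $S$, and lower bounds the conditional selection probability by $\frac{1-e^{-x_u}}{x_u}\prn{1 - \prn{1-e^{-1+x_u}}^{|N(u)|}}$. Your pre-assignment $\sigma$ plus a per-slot invocation of Theorem~\ref{thm:ocrs-e} is arguably cleaner and applies to any point of the transversal matroid polytope via the fractional-matching decomposition, whereas the paper's analysis uses the per-vertex constraints $\sum_{u \in N(v)} x_u \leq 1$. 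Two small repairs: your displayed family must also intersect with the per-slot allowance sets of Theorem~\ref{thm:ocrs-e} (injectivity of $\sigma$ alone is not enough --- if some $y_{u'v}=1$ the last element of slot $v$ is never selectable), and you should note that the per-slot guarantee composes against the almighty adversary because the event in Definition~\ref{def:cSelectable} is order-free. Note also that committing each $u$ to a single slot inherently caps this reduction at $\nicefrac{1}{e}$.

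The second clause, however, is where your proposal has a genuine gap, and you identify it yourself: the ``decoupling'' of the availability events across $v \in N(u)$ is not a technicality to be filled in later --- it is the entire content of the paper's proof of the $\prn{1-\nicefrac{1}{e}}$ statement. Moreover, your formulation ``accept $u$ as soon as some $v \in N(u)$ is still available'' is order-dependent and does not by itself define a down-closed family fixed before the arrivals, as the greedy-OCRS definition requires. The paper resolves exactly this by making the randomness static and per-pair: $u$ belongs to each $R_v$, $v \in N(u)$, independently with probability $q_u$ calibrated so that $\Pr\brk{u \in \cup_v R_v} = \frac{1-e^{-x_u}}{x_u}$; the event that slot $v$ cannot serve $u$ is then bounded by $1 - e^{-1+x_u}$ using $\sum_{u' \in N(v)} x_{u'} \leq 1$, and multiplying over the at least three neighbours gives selectability at least $f_k(x_u) = \frac{1-e^{-x_u}}{x_u}\prn{1 - \prn{1-e^{-1+x_u}}^{k}}$ with $k = |N(u)| \geq 3$, minimized at $x_u = 1$ with value $1-\nicefrac{1}{e}$. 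Without a concrete mechanism of this kind, your sketch of the strengthened bound remains a plan rather than a proof.
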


\subsection{Related work}

Since their introduction \cite{crs}, Contention Resolution Schemes (CRSs) have found several applications. Applications of CRSs in Bayesian mechanism design and posted price mechanisms \cite{chawla10} can be found in \cite{crs}. Later, Yan \cite{qiqi} connected mechanism design with the notion of correlation gap \cite{adsy-corgap}. OCRSs were developed \cite{ocrs} with applications to Bayesian mechanism design as one of the main motivations as they directly
translate to competitive ratios for the prophet inequality problem \cite{ocrs, rubin, rs, chek-liv}.
In fact, Alaei's work on uniform matroids \cite{Alaei14} precedes \cite{ocrs} and can be seen
as an OCRS, even though it is formulated differently. Random order CRSs (ROCRSs) were introduced
in \cite{rocrs} and yield improved bounds when the arrival order is random.

As stated previously, Feldman, Svensson and Zenklusen \cite{ocrs} gave the first greedy
OCRS for matroids, which is $\nicefrac{1}{4}$-selectable. Lee and Singla \cite{lee-singla}
showed a reverse connection between OCRSs and prophet inequalities, obtaining a
$\nicefrac{1}{2}$-selectable (non-greedy) OCRS for matroids and a $\prn{1 - \nicefrac{1}{e}}$-selectable
ROCRS for the single item setting. Adamczyk and Wlodarczyk \cite{rocrs} obtained several results,
including a $\nicefrac{1}{k+1}$-selectable ROCRS for the intersection of $k$ matroids. For matchings,
Ezra et al \cite{ezra} designed a $0.337$-selectable OCRS for bipartite graphs, while Bruggmann
and Zenklusen \cite{bruggmann-rico} developed optimal monotone CRSs via a novel polyhedral approach.

This work is connected to stochastic optimization, online algorithms, mechanism design and
submodular optimization, all of which have extensive literature. There have been several surveys
on the topic \cite{anupam-survey, Lucier-survey, Correa-survey, Dinitz-survey, hill-kertz-survey},
as well as a survey on random-order models in general \cite{GuptaS-survey}. Singla's thesis
\cite{singla-thesis} has connections to several of the topics discussed here. On the application
side, prophet inequality and secretary problems have received significant attention in the last
years, due to their connections with Bayesian mechanism design and posted price mechanisms
\cite{gilb-most, hill-kertz, kertz, dynkin, kren-such, klein-wein, azar, correa, correa2, abolh, esf-prophsec, haji, EhsaniHKS18, submod-sec}, while ROCRSs have found several applications to \emph{stochastic probing} \cite{ward, gupta1, gupta2, gupta3, singla, asad-nazer}.
Recently, Dughmi \cite{dughmi1, dughmi2} showed the equivalence between the existence of
constant-factor \emph{universal} OCRSs and a constant-factor approximation to the famous matroid secretary problem \cite{mat-sec}. Apart from OCRSs, the other main technique that has emerged for proving prophet inequalities and guarantees for posted-price mechanisms is the "balanced prices" framework \cite{klein-wein, CombAuctPostedPrices, Dutting2, DuttingKL}. We refer the reader to a survey by Lucier \cite{Lucier-survey} for more information on this separate technique.

Independently, \cite{oblivious-ocrs} study the problem of designing an {\em oblivious} OCRS \footnote{An
CRS (or OCRS) is called oblivious if and only if it does not make use of the fractional point $x$, i.e.
if for every $S \subseteq \cN$ , the distribution of $\pi_x(S)$ and the distribution of $\pi_y(S)$ are
identical for any two fractional points $x, y \in \cP_\cF$.} for the same setting and obtain a similar
result, showing that there exists a $\nicefrac{1}{e}$-selectable oblivious OCRS and no oblivious OCRS can be
$\prn{\nicefrac{1}{e} + \eps}$-selectable for any $\eps > 0$. We note that the two results (and schemes) are
very different. In fact, their OCRS is not greedy, while ours is not oblivious. Whether one can achieve
similar guarantees with greedy and oblivious OCRSs for more general settings is an interesting open problem.

\subsection{Roadmap}

We begin in Section~\ref{sec:preliminaries} with some background. Then, in Section~\ref{sec:e-ocrs}, we present our first main result, the $\nicefrac{1}{e}$-selectable greedy OCRS for the single-item setting and partition matroids.
Then, in Section~\ref{sec:tightness}, we show that our greedy OCRS is optimal. Finally, we present our greedy OCRS for transversal matroids in Section~\ref{sec:extension}, which also achieves the optimal $\nicefrac{1}{e}$ selectability and show it performs even better under mild assumptions on the structure of the transversal matroid. All omitted proofs can be found in Appendix~\ref{app:proofs}.

\section{Preliminaries}\label{sec:preliminaries}

Before we proceed, we present the formal definitions of CRSs, OCRSs and greedy OCRSs and briefly describe a $1/4$-selectable single item OCRS by \cite{ocrs}.

Let $\cN$ be a finite ground set. A constraint family over $\cN$ is
a subset $\cI \subseteq 2^\cN$; a set $S \in \cI$ is called feasible, while a
set $S \not \in \cI$ is called infeasible. We say $\cP_\cI \subseteq [0,1]^\cN$
is a polyhedral relaxation of $(\cN,\cI)$ if $\cP_\cI$ is a polyhedron and
$\1_S \in \cP_\cI$ for all $S \in \cI$ (here $\1_S$ is the characteristic
vector of $S$).

Given a polyhedral relaxation $\cP_\cI$ of a constraint $(\cN,\cI)$ and a point
$x \in \cP_\cI$, a natural question is whether we can round $x$ in order
to obtain a feasible set $S \in \cI$. One way to achieve this is via Contention Resolution Schemes, which we define below.

\begin{definition}[Contention Resolution Scheme \cite{crs}]
  Let $b, c \in [0,1]$. A $(b, c)$-balanced {\em Contention Resolution
    Scheme} $\pi$ for $\cP_\cI$ is a procedure that for every
  $x \in b \cdot \cP_{\cI}$ and $A \subseteq \cN$, returns a
  random set $\pi_{x}(A) \subseteq A \cap \text{support}(x)$
  and satisfies the following properties:
\begin{enumerate}
\item $\pi_{x}(A) \in \cI$ with probability
  $1, \quad \forall A \subseteq \cN, x \in b \cdot \cP_{\cI}$,
  and
\item for all $i \in \text{support}(x)$, $\Pr\brk{i \in \pi_{x}(R(x)) \midd i \in R(x)} \geq c, \quad \forall x \in b \cdot \cP_{\cI}$,
\end{enumerate}
where $R(x) \subseteq \cN$ denotes a random set in which every element $i \in \cN$ appears independently with probability $x_i$.

The scheme is said to be {\em monotone} if $\Pr\brk{i \in \pi_{x}(A_1)} \geq \Pr\brk{i \in \pi_{x}(A_2)}$ whenever $i \in A_1
\subseteq A_2$.
\end{definition}

For the remainder of this paper, we drop the subscript in $\cP_\cI$ and simply write $\cP$ whenever the constraint is clear from context.

CRSs are offline rounding schemes. In the case where the arrival order of
the elements is selected by an adversary, we can use the following
notion of {\em Online Contention Resolution Schemes (OCRS)} to round $x$.

\begin{definition}[Online Contention Resolution Scheme (OCRS) \cite{ocrs}]
  For an online selection setting where a point $x \in \cP$ is given, we draw
  a random subset of the elements $R(x)$, in which each element $i$ appears independently with probability $x_i$. We call $R(x)$ the set of \emph{active} elements. Afterwards, we observe whether the element $e \in \cN$ are active
  ($e \in R(x)$), one by one, and have to immediately and irrevocably decide
  whether to select an element or not before the next element is revealed.
  An {\em Online Contention Resolution Scheme} for $\cP$ is an online algorithm
  which selects a subset $I \subseteq R(x)$ such that $\1_I \in \cP$.
\end{definition}

A scheme is called a {\em Random Order Contention Resolution Schemes (ROCRS)} if,
instead of being chosen by an adversary, the arrival order of the elements is chosen
uniformly at random. Adamczyk and Wlodarczyk present several interesting results on
ROCRSs in \cite{rocrs}. In the case of adversarial arrival order, however, one can
distinguish between three different adversaries in terms of the information they
have at their disposal. An \emph{offline adversary}, which is the weakest of the
three, has to fix an ordering of the elements before any of the elements are revealed.
An \emph{almighty adversary}, the most powerful one, has access to the realizations
of all random events; both the set of active elements and any potential random bits
the algorithm may use. Therefore, an almighty adversary can predict the algorithm's
behaviour and choose a truly worst-case ordering of the elements for the particular
algorithm. In between the two extremes is the \emph{online adversary}. An online
adversary's choices can only depend on the realizations of the elements that have
appeared so far. In other words, the adversary has, at any step, exactly the same
information as the algorithm, and their decision as to which element to reveal at
step $i$ can only depend on the realizations of the elements revealed in steps
$1$ through $i-1$.

We also define the notion of a \emph{greedy} OCRS, which provide
guarantees with respect to an almighty adversary.

\begin{definition}[Greedy OCRS \cite{ocrs}]
Let $\cP \subseteq {[0,1]}^n$ be a relaxation of the feasible sets $\cF \subseteq 2^\cN$.
An OCRS $\pi$ for $\cP$ is called a {\em greedy OCRS} if, for any $x \in \cP$, $\pi$
defines a down-closed subfamily of feasible sets $\cF_{\pi, x} \subseteq \cF$, and
it selects an active element $e$ when it arrives if, together with the set of elements
already selected, the resulting set is in $\cF_{\pi, x}$. We say that $\pi$ is a
randomized greedy OCRS if, given $x$, the choice of $\cF_{\pi, x}$ is randomized.
Otherwise, we say that $\pi$ is a deterministic greedy OCRS.
\end{definition}

For the remainder of this paper, we drop the subscript in $\cF_{\pi, x}$ and simply write $\cF_x$ or $\cF$, whenever $\pi$ and $x$ are clear from context.

Intuitively, we say a greedy OCRS is $c$-selectable if and only if
an active element $e \in R(x)$ can be included in the currently
selected elements $I \subseteq R(x)$ and maintain feasibility with
probability at least $c$.

\begin{definition}[$c$-selectability]\label{def:cSelectable}
Let $c\in [0,1]$. A greedy OCRS for $\cP$ is $c$-selectable if and only if for any
$x \in P$ we have 
\begin{equation*}
\Pr\brk{I \cup \{e\} \in \cF_x \forall I \subseteq R(x), I\in \cF_x} \geq c \qquad \forall e \in \cN.
\end{equation*}
\end{definition}

Notice that a $c$-selectable greedy OCRS guarantees that each active element
$e$ is selected with probability at least $c$, even against the almighty adversary.
We should note that the randomness in the above definition is with respect to
both the randomness of $R(x)$ and also any potential randomness the greedy OCRS
might use to decide upon $\cF_x$.

Next, we briefly describe the $\nicefrac{1}{4}$-selectable single item greedy OCRS by \cite{ocrs}.
Given a fractional point $x$ such that $\sum_{i = 1}^n {x_i} \leq 1$, the greedy OCRS
will, at step $i$, observe whether element $i$ is active or not. If it is active,
the greedy OCRS will choose to select with probability $1/2$ or discard it and move on to
the next element. Since each element is active with probability $x_i$ and is selected
with probability $\nicefrac{x_i}{2}$, the expected number of selected elements is at most
half, and thus, by Markov's inequality, the probability the greedy OCRS selects no elements
is at least $\nicefrac{1}{2}$. Therefore, for every element $i$, we reach $i$ without having
selected an element with probability at least $\nicefrac{1}{2}$ and we select $i$, given that
it is active, with probability $\nicefrac{1}{2}$, for an overall selectability of $\nicefrac{1}{4}$.

We should note that for the single item setting there exists a $\nicefrac{1}{2}$-selectable OCRS \cite{Alaei14} but, crucially, it is not greedy. In fact, we show in the Section~\ref{sec:tightness} that there is no $\nicefrac{1}{2}$-selectable greedy OCRS for the single item setting.

\section{An $\nicefrac{1}{e}$ - selectable greedy OCRS for the single-item setting}\label{sec:e-ocrs}

This section is dedicated to proving Theorem~\ref{thm:ocrs-e}. Before we begin, we need the following lemma.

\begin{lemma}\label{lem:logs}
Let $a_1, \dots, a_k \in [0,1]$. Then
\[
\ln{\prn{1 - \frac{a_k}{2}}} + \sum_{j = 1}^{k-1} {\ln{\prn{1 - a_j + \frac{a^2_j}{2}}}} \geq - a_k - \sum_{j = 1}^{k-1} {a_j}
\]
\end{lemma}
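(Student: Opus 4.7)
The plan is to prove the inequality term-by-term, exploiting the fact that both sides are sums indexed by $j$. Concretely, it suffices to establish the following two single-variable inequalities for $a \in [0,1]$:
\begin{enumerate}
\item $\ln(1 - a/2) \geq -a$;
\item $\ln(1 - a + a^2/2) \geq -a$.
\end{enumerate}
Applying (1) to $a_k$ and summing (2) over $a_1, \dots, a_{k-1}$ then yields the statement directly. Note that both arguments of the logarithms are positive on $[0,1]$ (the first is at least $1/2$, and the second is decreasing on $[0,1]$ with value $1/2$ at $a=1$), so the logarithms are well-defined.

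For (1), I would exponentiate and reduce to showing that $g(a) := 1 - a/2 - e^{-a} \geq 0$ on $[0,1]$. A quick computation gives $g''(a) = -e^{-a} < 0$, so $g$ is concave on $[0,1]$ and attains its minimum at an endpoint. Since $g(0) = 0$ and $g(1) = 1/2 - 1/e > 0$, concavity closes the case. For (2), I would similarly reduce to showing $h(a) := 1 - a + a^2/2 - e^{-a} \geq 0$ for $a \geq 0$. Here one observes that $h(0) = h'(0) = 0$ and $h''(a) = 1 - e^{-a} \geq 0$, so integrating twice from $0$ yields $h(a) \geq 0$; this is just the standard second-order Taylor lower bound for $e^{-a}$.

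I do not expect any significant obstacle: the lemma is essentially an additive repackaging of two elementary comparisons between $e^{-a}$ and its low-order polynomial approximations $1 - a/2$ and $1 - a + a^2/2$. The only mild care needed is to handle the different treatment of the $a_k$ term (which is bounded via (1)) versus the terms $a_1, \dots, a_{k-1}$ (which are bounded via (2)); this asymmetry is built into the statement of the lemma and poses no real difficulty once the two per-term inequalities are in hand.
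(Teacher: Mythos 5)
Your proposal is correct and matches the paper's proof in essence: the paper also splits the lemma into exactly the two per-term inequalities $\ln(1-a/2)\geq -a$ and $\ln(1-a+a^2/2)\geq -a$, verifying each by an elementary calculus argument (the paper shows $e^x(1-x/2)$ and $e^x(1-x+x^2/2)$ are increasing on $[0,1]$ with value $1$ at $0$, while you use concavity/endpoint and Taylor-type arguments, which are interchangeable details). No gaps.
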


Next, consider a ground set $\cN = \set{e_1, e_2, \dots, e_n}$, and let $\cM = (\cN, \cI)$ be
the uniform matroid of rank $1$ with respect to $\cN$, i.e. $\cI = \set{\set{e_i} \midd e_i \in \cN}$.
Let $\cP$ be the following polyhedral relaxation of $\cM$:
\[
\cP = \set{x \in {[0,1]}^n \midd \sum_{i = 1}^n {x_i} \leq 1}
\]

For a given $x \in \cP$, let $\pi = \pi_x$ denote the OCRS we will create. $\pi$ will draw a
random set $R(q)$ where each element $e_i$ appears in $R(q)$ independently with some probability $q_i$.
The family of feasible subsets is
\[
\cF_{\pi, x} = \set{\set{e_i} \midd e_i \in R(q)}.
\]
We set $q_i = 1 - \nicefrac{x_i}{2}$ for all $e_i \in \cN$. Afterwards,
$\pi$ selects the first element $e_i$ that is active and that $\set{e_i} \in \cF$.

\begin{lemma}\label{lem:rand-greedy}
$\pi$ is a randomized greedy OCRS.
\end{lemma}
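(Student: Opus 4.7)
The plan is to verify the three defining properties of a randomized greedy OCRS directly from the construction. Because $\cF_{\pi, x}$ is defined in terms of the random set $R(q)$, sampled once at the start (independently of the adversary), it is automatically a randomized subfamily, so the ``randomized'' aspect is free; what remains is to check that (i) $\cF_{\pi, x}$ is a subfamily of $\cF$, (ii) $\cF_{\pi, x}$ is down-closed, and (iii) $\pi$'s behaviour coincides with the greedy rule relative to $\cF_{\pi, x}$.

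For (i), I would first observe that the empty set lies in $\cF_{\pi, x}$ by convention (this is harmless and needed for down-closedness), and that every other element of $\cF_{\pi, x}$ is a singleton $\{e_i\}$ with $e_i \in \cN$. Since $\cM$ is the rank-$1$ uniform matroid, $\cI$ consists precisely of $\emptyset$ and all singletons, so $\cF_{\pi, x} \subseteq \cI = \cF$ trivially.

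For (ii), I would note that the only subsets of an element of $\cF_{\pi, x}$ (which is either $\emptyset$ or a singleton) are $\emptyset$ and possibly the singleton itself, both of which are in $\cF_{\pi, x}$. Hence $\cF_{\pi, x}$ is down-closed. For (iii), I would argue that when an element $e_i$ arrives and is active, the current set $I$ maintained by $\pi$ is either empty or a singleton $\{e_j\}$ with $j < i$. In the former case, $\pi$ accepts $e_i$ iff $\{e_i\} = I \cup \{e_i\} \in \cF_{\pi, x}$, which matches the greedy rule. In the latter case, $I \cup \{e_i\} = \{e_j, e_i\}$ is already infeasible in $\cF$ (hence in $\cF_{\pi, x}$), so not selecting $e_i$ is again consistent with greedy behaviour. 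Combining these three items shows that $\pi$ fits the definition of a randomized greedy OCRS verbatim.

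There is essentially no hard step here: the argument is a direct unpacking of definitions, and the only minor subtlety is making sure that the randomness used to sample $R(q)$ is fixed before the adversary reveals the arrival order, which is immediate since $R(q)$ depends only on $x$ (through $q_i = 1 - x_i/2$) and not on any adversarial choice. The substantive content of the construction, namely the selectability guarantee of $1/e$, is orthogonal to Lemma~\ref{lem:rand-greedy} and is presumably where Lemma~\ref{lem:logs} is used in the subsequent analysis.
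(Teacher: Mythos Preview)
Your proposal is correct and follows essentially the same approach as the paper: a direct unpacking of the definition, checking that $\cF_{\pi,x}$ is a down-closed subfamily of $\cI$ and that $\pi$'s accept rule coincides with the greedy rule. If anything, your write-up is slightly more careful (you explicitly handle the case where $I$ is already a singleton and note the empty-set convention), but the argument is the same.
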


Next, we quantify the probability that each element is selected by $\pi$, given that it is active.

\begin{lemma}\label{lem:selectability}
$\pi$ selects every element $e_i \in \cN$, given that it is active, with probability at least $\nicefrac{1}{e}$.
\end{lemma}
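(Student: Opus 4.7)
The plan is to identify the worst-case behaviour of the almighty adversary, write an exact expression for the probability that $e_i$ is selected given that it is active, and then apply Lemma~\ref{lem:logs} together with the feasibility constraint $\sum_j x_j \le 1$.

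First, I would pin down what the adversary can do. Since $\pi$ is greedy with $\cF_{\pi,x} = \{\{e_j\} \mid e_j \in R(q)\}$, a previously arrived element $e_j$ blocks $e_i$ precisely when $e_j$ is both active and lies in $R(q)$, that is when $e_j \in R(x) \cap R(q)$. The almighty adversary knows all realizations of $R(x)$ and $R(q)$, so to minimize the probability that $e_i$ is selected conditional on $e_i \in R(x)$, the adversary will place every $e_j$ with $j \neq i$ that is both active and in $R(q)$ before $e_i$ in the arrival order. Hence, conditional on $e_i$ being active, $e_i$ is selected if and only if $e_i \in R(q)$ and no other element lies in $R(x) \cap R(q)$. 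Since active statuses and membership in $R(q)$ are independent across elements and independent of each other,
\[
\Pr\brk{e_i \text{ selected} \midd e_i \in R(x)} \;=\; q_i \cdot \prod_{j \neq i} \bigl(1 - x_j q_j\bigr).
\]

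Next I would substitute $q_j = 1 - x_j/2$ to rewrite this as
\[
\Pr\brk{e_i \text{ selected} \midd e_i \in R(x)} \;=\; \prn{1 - \tfrac{x_i}{2}} \prod_{j \neq i} \prn{1 - x_j + \tfrac{x_j^2}{2}}.
\]
Taking logarithms and applying Lemma~\ref{lem:logs} with $a_k := x_i$ and $\{a_j\}_{j < k}$ ranging over $\{x_j\}_{j \neq i}$ (all of which lie in $[0,1]$) gives
\[
\ln\prn{1 - \tfrac{x_i}{2}} + \sum_{j \neq i} \ln\prn{1 - x_j + \tfrac{x_j^2}{2}} \;\geq\; -x_i - \sum_{j \neq i} x_j \;=\; -\sum_{j=1}^n x_j \;\geq\; -1,
\]
where the last inequality uses $x \in \cP$. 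Exponentiating yields the desired bound $\Pr[e_i \text{ selected} \mid e_i \in R(x)] \ge 1/e$.

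The main subtlety is the justification of the worst-case order step: I need the almighty adversary to have no better strategy than front-loading all potential blockers of $e_i$. This follows because in the single-item setting any element that is both active and in $R(q)$ and appears before $e_i$ deterministically prevents $e_i$ from being selected, while elements arriving after $e_i$ cannot affect $e_i$'s fate, so reordering only the elements other than $e_i$ so that every blocker precedes it weakly decreases the selection probability conditional on every realization. Once this is observed, the remainder is the direct computation above, making Lemma~\ref{lem:logs} the technical workhorse of the argument.
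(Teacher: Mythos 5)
Your proof is correct and follows essentially the same route as the paper: it writes the conditional selection probability as $q_i$ times a product of factors $1 - x_j q_j$, substitutes $q_j = 1 - x_j/2$, and invokes Lemma~\ref{lem:logs} together with $\sum_j x_j \leq 1$. The only difference is cosmetic: the paper takes the product over the elements preceding $e_i$ in a relabeled fixed order, while you take it over all $j \neq i$ to account explicitly for the almighty adversary's worst-case ordering --- a slightly more careful but equivalent bound, since the extra factors only decrease the product and the same lemma still gives $\geq 1/e$.
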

\begin{proof}
We relabel the elements of $\cN$ so that each $e_i$ arrives in the $i$-th step. Consider an element
$e_i \in \cN$. Given that $e_i$ is active, since $\pi$ is a greedy OCRS, $\pi$ will select $e_i$ if and only
if it has not selected any elements before $e_i$ and also $\set{e_i} \in \cF_{\pi, x}$. Recall that we
have $\set{e_i} \in \cF_{\pi, x}$ with probability exactly $q_i = 1 - \nicefrac{x_i}{2}$. Furthermore, for
every element $e_j$ where $j < i$, it needs to be the case that we avoid having both
$\set{e_j} \in \cF_{\pi, x}$ and also $e_j$ coming up active. This happens with probability
$1 - x_j \cdot \prn{1 - \nicefrac{x_j}{2}} = 1 - x_j + \nicefrac{x^2_j}{2}$ for every $e_j$ where $j < i$. Overall, if we denote by $r_i$ the
probability that $e_i$ is selected by $\pi$, given that it is active, we have
\begin{align*}
\ln{r_i} &= \ln{\prn{\prn{1 - \frac{x_i}{2}} \cdot \prod_{j = 1}^{i-1} {\prn{1 - x_j + \frac{x^2_j}{2}}}}} = \ln{\prn{1 - \frac{x_i}{2}}} + \sum_{j = 1}^{i-1} {\ln{\prn{1 - x_j + \frac{x^2_j}{2}}}} \\
&\geq - x_i - \sum_{j = 1}^{i-1} {x_j} \geq -1,
\end{align*}
where the first inequality follows from Lemma~\ref{lem:logs} and the second inequality follows from $\sum_i {x_i} \leq 1$.
Therefore $r_i \geq \nicefrac{1}{e}$, for all $i \in \cN$.
\end{proof}

From Lemmas~\ref{lem:rand-greedy} and \ref{lem:selectability}, it follows that
$\pi$ is a $\nicefrac{1}{e}$-selectable (randomized) greedy OCRS for $\cP$.

\begin{remark}\label{rmk:jan}
In a personal communication, Jan Vondr\'{a}k devised an alternate scheme for
the problem, after we notified him of our scheme. With his consent 
\cite{vondrak-personal}, we have included this alternate scheme in the Appendix, which can be found in the supplementary material.
\end{remark}

\section{$\nicefrac{1}{e}$ is tight}\label{sec:tightness}

In this section, we present the proof of Theorem~\ref{thm:tightness}. Consider the instance
where $x_i = \nicefrac{1}{n}$ for all $e_i \in \cN$, where $n = |\cN|$, and let $A$ denote the set of active elements.
Any greedy OCRS $\pi$ will select a subset $S$ of $\cF = \set{{e_i} \midd e_i \in \cN}$ with some probability $\alpha_S$,
and then accept the first element in $S$ that comes up active. What is the worst-case probability that an element
from $S$ will be selected? This is minimized for the element in $S$ which is last in the arrival order, which has
a probability of being selected exactly equal to $\prn{1 - \nicefrac{1}{n}}^{|S| - 1}$, because the OCRS is greedy,
and it would select an element from $S$ which arrived earlier, if it came up active. Therefore, no greedy OCRS
can guarantee, for any $S \subseteq \cF$, that an element $e \in \cN$ will be selected, when $e \in A$, with
probability greater than $\prn{1 - \nicefrac{1}{n}}^{|S| - 1}$. Thus, for any $e \in \cN$ and any greedy OCRS $\pi$, we have
\begin{align}
\Pr\brk{e \in \pi(A) \midd e \in A} &\leq \sum_{\stack{S \subseteq \cN}{e \in S}} {\alpha_S \prn{1 - \frac{1}{n}}^{|S| - 1}} \nonumber \\
\label{eq:gOCRS-intermediate-bound} &= \sum_{k = 1}^n {\prn{1 - \frac{1}{n}}^{k - 1} \sum_{\stack{S \subseteq \cN \: : \: |S| = k}{e \in S}} {\alpha_S}}.
\end{align}

Next, for a greedy OCRS $\pi$ to be $c$-selectable, it needs to guarantee that

$\min_{e \in \cN} {\Pr\brk{e \in \pi(A) \midd e \in A}} \geq c$. Therefore, if we show that
\[
\min_{e \in \cN} \set{\sum_{k = 1}^n {\prn{1 - \frac{1}{n}}^{k - 1} \sum_{\stack{S \subseteq \cN \: : \: |S| = k}{e \in S}} {\alpha_S}}} \leq c,
\]
by \eqref{eq:gOCRS-intermediate-bound} it follows that $\pi$ cannot be $\prn{c+\eps}$-selectable for any $\eps > 0$.
\begin{lemma}\label{lem:greedy-lower-bound}
\[
\min_{e \in \cN} \set{\sum_{k = 1}^n {\prn{1 - \frac{1}{n}}^{k - 1} \sum_{\stack{S \subseteq \cN \: : \: |S| = k}{e \in S}} {\alpha_S}}} \leq \prn{1 - \frac{1}{n}}^{n-1}.
\]
\end{lemma}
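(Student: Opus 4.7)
The plan is to bound the minimum over $e \in \cN$ from above by the \emph{average} over $e \in \cN$ of the expression in question, which reduces the lemma to an elementary one-dimensional inequality. Since the $\alpha_S$ are probabilities they satisfy $\sum_S \alpha_S \leq 1$. Interchanging the order of summation and grouping contributions by the size of $S$ gives
\[
\frac{1}{n}\sum_{e \in \cN}\sum_{k=1}^n \prn{1-\tfrac{1}{n}}^{k-1}\sum_{\stack{S \subseteq \cN\,:\,|S|=k}{e\in S}}\alpha_S \;=\; \frac{1}{n}\sum_{k=1}^n k\prn{1-\tfrac{1}{n}}^{k-1} p_k,
\]
where $p_k := \sum_{S \subseteq \cN\,:\,|S|=k}\alpha_S$ is the probability that $\pi$ selects an acceptance family of cardinality exactly $k$. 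So it suffices to show $\sum_{k=1}^n k\prn{1-\tfrac{1}{n}}^{k-1} p_k \leq n\prn{1-\tfrac{1}{n}}^{n-1}$.

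The key step is the analysis of $f(k) := k\prn{1-\tfrac{1}{n}}^{k-1}$ on $\set{1,\dots,n}$. Computing the ratio $f(k+1)/f(k) = \prn{1+\tfrac{1}{k}} \cdot \tfrac{n-1}{n}$, I see that it is $\geq 1$ precisely when $k \leq n-1$, so $f$ is non-decreasing on $\set{1,\dots,n-1}$; a direct substitution then yields $f(n-1)=f(n)= n\prn{1-\tfrac{1}{n}}^{n-1}$, which is therefore the pointwise maximum. Combining this with $\sum_k p_k \leq 1$ gives the required bound immediately.

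I do not expect any real obstacle: the argument is a one-line averaging bound, with the only non-trivial ingredient being the monotonicity analysis of $f$. What is worth remarking on is that the bound is \emph{attained} (at this averaging level) by the deterministic greedy OCRS that declares every element acceptable, i.e.\ $\alpha_\cN = 1$, for which $f(n)$ is exactly the coefficient appearing in the sum. Since $\prn{1-\tfrac{1}{n}}^{n-1} \to \nicefrac{1}{e}$ as $n \to \infty$, this is precisely what drives Theorem~\ref{thm:tightness} and explains why $\nicefrac{1}{e}$ is the correct tight constant for the single-item greedy OCRS.
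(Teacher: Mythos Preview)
Your proof is correct and essentially identical to the paper's: both bound the minimum by the average over $e\in\cN$, interchange the order of summation to rewrite the average as $\tfrac{1}{n}\sum_{k=1}^n k\prn{1-\tfrac{1}{n}}^{k-1}\beta_k$ (with $\beta_k$ the total mass on size-$k$ sets), and then use that $k\mapsto k\prn{1-\tfrac{1}{n}}^{k-1}$ is maximized at $k=n$ together with $\sum_k \beta_k\le 1$. The paper wraps the same computation in a proof by contradiction, but the substance is the same; your observation that $f(n-1)=f(n)$ is in fact slightly sharper than the paper's ``strictly increasing'' claim.
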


By Lemma~\ref{lem:greedy-lower-bound}, since $\lim_{n \to \infty} {\prn{1 - \nicefrac{1}{n}}^{n-1}} = \nicefrac{1}{e}$,
it follows that there exists no greedy OCRS for $\cP$ that selects an element $e$, when active, with
probability at least $\nicefrac{1}{e} + \eps$ for all $e \in \cN$.

\section{Extension to Transversal Matroids}\label{sec:extension}

In this section, we prove Theorem~\ref{thm:ocrs-e-transversal}. Let $\cM = (U, \cI)$ be a transversal matroid and
$G = (U \cup V, E)$ denote the underlying bipartite graph, where $|U| = n$. We know that a subset $S \subseteq U$
is independent if and only if there exists a matching in $G$ that covers $S$. Let $\cP$ be the natural polyhedral
relaxation of $\cM$. For a given $x \in \cP$, let $\pi = \pi_x$ be the greedy OCRS we will create. For each
$v \in V$, $\pi$ will draw a random set $R_v \subseteq N(v)$, in which each element $u \in U$ appears with probability
$q_u$. For every $u \in U$, let $N(u)$ denote the set of neighbors of $u$ in $G$. Then, we set
\[
q_u = 1 - \prn{1 - \frac{1-e^{-x_u}}{x_u}}^{\frac{1}{|N(u)|}}.
\]
It is easy to see that $q_u \in [0,1]$ for every $|N(u)| \geq 1$, and thus $q_u$ is well-defined.

Next, we create a down-closed subfamily of feasible sets by taking all possible combinations of sets created by
taking at most one element from each $R_v$ and then taking the union of all such elements. Specifically,
\[
\cF_{\pi, x} = \set{S = \set{u_1, \dots, u_k} \subseteq U \midd \exists \: T = \set{v_1, \dots, v_k} \subseteq V \text{ s.t. } \: u_j \in R_{v_j}, \: \: \forall j \in \set{1, \dots, k}}.
\]
Any set $S$ in $\cF$ is clearly an independent set of $\cM$, as the constraints guarantee that there always exists
a matching in $G$ that covers $S$. During the online process, $\pi$ starts with a set of selected elements
$S = \emptyset$, and greedily selects an active element $u$ if $S + u \in \cF$.

The proof of the following lemma is identical to the proof of Lemma~\ref{lem:rand-greedy} and follows from the
discussion above.

\begin{lemma}\label{lem:rand-greedy3}
$\pi$ is a randomized greedy OCRS.
\end{lemma}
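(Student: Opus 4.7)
The plan is to verify the two defining conditions of a randomized greedy OCRS: that $\cF_{\pi,x}$ is a (random) down-closed subfamily of $\cI$, and that $\pi$ operates by the greedy acceptance rule with respect to $\cF_{\pi,x}$. The last point is immediate from the description of the online process (upon the arrival of an active $u$, we accept iff $S+u\in\cF_{\pi,x}$, where $S$ is the set already accepted), so the content of the lemma is entirely in checking that $\cF_{\pi,x}$ has the required combinatorial structure and depends only on the random sets $\{R_v\}_{v\in V}$ drawn before the online process begins.

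To show $\cF_{\pi,x}\subseteq\cI$, I would take any $S=\{u_1,\dots,u_k\}\in\cF_{\pi,x}$ and unwind the definition: by construction there exists a \emph{set} $T=\{v_1,\dots,v_k\}\subseteq V$ of size exactly $k$ such that $u_j\in R_{v_j}$ for every $j$. Since $R_{v_j}\subseteq N(v_j)$, each pair $\{u_j,v_j\}$ is an edge of $G$, and since the $v_j$'s are pairwise distinct (as $|T|=k$) and the $u_j$'s are pairwise distinct, the collection $\{u_jv_j\}_{j=1}^k$ is a matching in $G$ that covers $S$. By the definition of a transversal matroid, this witnesses $S\in\cI$.

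Down-closedness of $\cF_{\pi,x}$ is similarly direct: if $S=\{u_1,\dots,u_k\}\in\cF_{\pi,x}$ with witness $T=\{v_1,\dots,v_k\}$ and $S'\subseteq S$, then restricting $T$ to the indices $j$ for which $u_j\in S'$ yields a witness for $S'$, so $S'\in\cF_{\pi,x}$. Finally, $\cF_{\pi,x}$ is determined entirely by the random draws $\{R_v\}_{v\in V}$, which are made before any element of $U$ is revealed; hence $\cF_{\pi,x}$ is a valid randomized choice of down-closed subfamily, and $\pi$ qualifies as a randomized greedy OCRS.

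I do not anticipate any real obstacle here; the only subtlety is making sure that the $v_j$'s in the witnessing set are distinct (so that one indeed obtains a matching rather than a multi-assignment), which is enforced by the definition of $\cF_{\pi,x}$ via the requirement that $T$ be a set of the same cardinality as $S$. The argument mirrors that of Lemma~\ref{lem:rand-greedy} in the single-item case, with the single-item feasibility check replaced by the existence of a matching witness in $G$.
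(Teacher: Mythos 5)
Your proof is correct and follows essentially the same route as the paper, which simply points to the single-item argument (Lemma~\ref{lem:rand-greedy}) and the preceding discussion noting that every set in $\cF_{\pi,x}$ is covered by a matching and hence lies in $\cI$. You merely spell out explicitly the details the paper leaves implicit (the matching witness from the distinct $v_j$'s, down-closedness by restricting the witness, and that $\cF_{\pi,x}$ is fixed by the random draws $\{R_v\}$ before the online process), so no gap remains.
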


Next, we again lower bound the selection probability of an active element.

\begin{lemma}\label{lem:ocrs-e-trans}
$\pi$ selects every element $u \in U$, given that it is active, with probability at least $\nicefrac{1}{e}$. Furthermore,
if $|N(u)| \geq 3$ for all $u \in U$, $\pi$ selects every element $u \in U$, given that it is active, with
probability at least $1 - \nicefrac{1}{e}$.
\end{lemma}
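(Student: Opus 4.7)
The plan is to lower-bound the selectability of an active element $u \in R(x)$ by the probability of a more tractable sufficient event. For each $v \in N(u)$, let
\[
H_v := \{u \in R_v\} \cap \bigl\{\forall\, u' \in N(v) \setminus \{u\}:\ u' \notin R_v \cap R(x)\bigr\}.
\]
If $H_v$ holds for some $v$, then in the random bipartite graph on $R(x) \cup V$ with edges $\{(u',v): u' \in R_v\}$, vertex $v$'s only neighbor inside $R(x)$ is $u$; a standard augmenting-path argument then shows that $u$ must lie in some maximum matching of $H[R(x)]$, so $u \notin \mathrm{cl}_{\cF}(R(x) \setminus \{u\})$ and $u$ is selectable. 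Hence the task reduces to lower-bounding $\Pr[\bigcup_{v \in N(u)} H_v \mid u \in R(x)]$.

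To analyze this union I would condition on $A := R(x)$. Since the $R_v$'s are mutually independent, the events $H_v$ are conditionally independent given $A$, with
\[
\Pr[H_v \mid A] = q_u \prod_{u' \in A \cap N(v) \setminus \{u\}} (1 - q_{u'}),
\]
and thus $\Pr[\bigcup_v H_v \mid A] = 1 - \prod_v (1 - \Pr[H_v \mid A])$. The crucial algebraic lever is the calibration built into the definition of $q_u$, namely $(1-q_u)^{|N(u)|} = 1 - (1-e^{-x_u})/x_u$; in particular, the probability that $u$ is a candidate at some $v \in N(u)$ is exactly $(1-e^{-x_u})/x_u$, a quantity in $[1-1/e,\,1]$ for $x_u \in (0,1]$.

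For the first statement, the plan is to apply the single-term bound $\Pr[\bigcup_v H_v] \ge \Pr[H_v]$ for a suitably chosen $v$, combining $1 - t \le e^{-t}$ on each $(1-q_{u'} x_{u'})$-factor with a bound on $\sum_{u' \in N(v) \setminus \{u\}} q_{u'} x_{u'}$ coming from a fractional matching decomposition of $x \in \cP$ (which exists because $\cP$ is the matroid polytope of a transversal matroid). Routine algebra together with the calibration identity then yields $\Pr[H_v] \ge 1/e$. For the improved bound when $|N(u)| \ge 3$, the single-$v$ argument is insufficient and one must aggregate over the whole union; applying $1 - (1-t)^d \ge 1 - e^{-dt}$ with $d = |N(u)| \ge 3$ to the conditional product, followed by an outer expectation over $A$ that leverages the calibration to recover $(1-e^{-x_u})/x_u$ as the dominant term, delivers the stronger bound $\ge 1 - 1/e$.

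The main obstacle is that the events $H_v$ are positively correlated through the shared randomness in $R(x)$, so naive independent-union lower bounds go in the wrong direction. Conditioning on $R(x)$ decouples the dependence on the $R_v$'s but leaves an outer expectation of a non-linear product in the activity indicators, which must be bounded carefully. Verifying that the specific calibration $q_u = 1 - (1 - (1-e^{-x_u})/x_u)^{1/|N(u)|}$ is precisely tuned so that the two thresholds---$1/e$ in general and $1 - 1/e$ under the assumption $|N(u)| \ge 3$---are simultaneously met is the technical crux of the proof.
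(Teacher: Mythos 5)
Your sufficient event is sound --- if some $v \in N(u)$ has $u \in R_v$ and no other active element in $R_v$, the augmenting-path argument correctly shows $u$ is selectable against every feasible $I \subseteq R(x)$ --- but the quantitative plan built on it does not deliver either threshold, and it diverges from what the paper actually does. First, the single-term bound $\Pr[\bigcup_v H_v] \geq \Pr[H_v]$ cannot give $1/e$: you have $\Pr[H_v] \leq q_u$, and $q_u = 1 - \bigl(1 - \frac{1-e^{-x_u}}{x_u}\bigr)^{1/|N(u)|}$ is the \emph{per-vertex} appearance probability, which scales like $\Theta(1/|N(u)|)$; e.g.\ for $x_u = 1$ and $|N(u)| = 10$ (with no competing mass, so $\Pr[H_v] = q_u$ exactly) one gets $q_u = 1 - e^{-1/10} \approx 0.095 \ll 1/e$. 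The quantity you cite as ``the probability that $u$ is a candidate at some $v$'', namely $\frac{1-e^{-x_u}}{x_u}$, is already the union over all of $N(u)$, so aggregation over $N(u)$ is unavoidable even for the weaker claim. (Also note the paper's $\cP$ imposes $\sum_{w \in N(v)} x_w \leq 1$ at every $v \in V$ directly; it is not the transversal-matroid polytope, and no fractional matching decomposition is used or would suffice for that step.)

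Second, and more seriously, even the full union $\bigcup_v H_v$ is too lossy for the $1 - 1/e$ statement, because $H_v$ declares $v$ blocked whenever \emph{any} active competitor lands in $R_v$, ignoring that such a competitor may be matchable to a different vertex. Concretely, take $|N(u)| = 3$, $x_u \to 0$, and many tiny competitors of total mass $1 - x_u$, each adjacent to exactly the same three vertices $N(u)$ (so every element has degree $3$ and all constraints of $\cP$ are tight). A tiny competitor has $q_{u'} \to 1$, so conditioned on being active it lies in all three $R_v$'s with probability tending to $1$ and kills every $H_v$; hence $\Pr[\bigcup_v H_v] \to e^{-(1-x_u)} \cdot \frac{1-e^{-x_u}}{x_u} \to 1/e < 1 - 1/e$. (A variant with disjoint tiny degree-$3$ competitors and $x_u = 1/2$, $|N(u)| \to \infty$ gives $\approx 0.609 < 0.632$, so this is not an artifact of sharing.) The true selectability in these instances is much larger precisely because one active competitor occupies only one vertex in any matching --- slack your event throws away. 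The paper's proof instead decouples the two sources of randomness: it uses $\Pr[\cE_u] = \frac{1-e^{-x_u}}{x_u}$ for ``$u$ appears in some $R_v$'', charges each competitor $u'$ the probability $x_{u'}\Pr[\cE_{u'}] = 1 - e^{-x_{u'}}$ of being active and appearing anywhere, bounds each per-vertex blocking factor by $1 - e^{-1+x_u}$ via $\sum_{w\in N(v)} x_w \le 1$, and analyzes $f_k(x) = \frac{1-e^{-x}}{x}\bigl(1 - (1-e^{-1+x})^{k}\bigr)$, which is at least $1/e$ for all $k \ge 1$ and at least $1 - 1/e$ for $k \ge 3$. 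To salvage your route you would need to weaken $H_v$ so that a competitor only counts as blocking when it cannot be rematched (or otherwise adopt the paper's accounting); as written, the proposal establishes neither statement.
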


We conclude that $\pi$ is a $\nicefrac{1}{e}$-selectable greedy OCRS for $\cM$ and that if $|N(u)| \geq 3$ for every
$u \in U$, $\pi$ is a $\prn{1 - \nicefrac{1}{e}}$-selectable greedy OCRS for $\cM$.

\paragraph{Acknowledgements:} The author would like to thank Chandra Chekuri, Ruta Mehta and Jan Vondr\'{a}k for guidance and helpful discussions.

\bibliographystyle{plainurl}
\bibliography{references}

\appendix

\section{Omitted Proofs}\label{app:proofs}

\subsection{Proof of Lemma~\ref{lem:logs}}

We split the statement into two separate parts.
\begin{claim}\label{clm:logs1}
\[
\ln{\prn{1 - \frac{a_k}{2}}} \geq - a_k
\]
\end{claim}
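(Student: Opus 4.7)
The plan is to reduce the inequality to a one-variable monotonicity check. Since $a_k$ appears only on its own (all other $a_j$ are absent from this claim), we may simply set $a = a_k \in [0,1]$ and show that $h(a) := \ln(1 - a/2) + a \geq 0$ on $[0,1]$. First I would note that $h(0) = \ln 1 + 0 = 0$, so it suffices to prove $h$ is non-decreasing on $[0,1]$. Differentiating gives
\[
h'(a) = -\frac{1/2}{1 - a/2} + 1 = \frac{(1 - a/2) - 1/2}{1 - a/2} \cdot 1 = \frac{1-a}{2 - a},
\]
which is manifestly $\geq 0$ for every $a \in [0,1]$ (both numerator and denominator are non-negative, and the denominator is bounded away from $0$). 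Hence $h$ is non-decreasing, and combining with $h(0) = 0$ yields $h(a) \geq 0$ on $[0,1]$, which rearranges to the claimed inequality $\ln(1 - a_k/2) \geq -a_k$.

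An equivalent route, if one prefers an exponential formulation, is to exponentiate and check $1 - a/2 \geq e^{-a}$ on $[0,1]$; the function $g(a) = 1 - a/2 - e^{-a}$ satisfies $g(0) = 0$, $g'(a) = e^{-a} - 1/2$, so $g$ increases on $[0, \ln 2]$ and decreases on $[\ln 2, 1]$, with $g(1) = 1/2 - 1/e > 0$, so $g \geq 0$ throughout. Either route works and there is no real obstacle; the only mild subtlety is that the standard bound $\ln(1-x) \geq -x$ is \emph{not} strong enough (it only gives $\ln(1 - a/2) \geq -a/2$), and so we genuinely need to exploit the factor of $1/2$ inside the logarithm, which both arguments do by noting that $h'(a) \geq 0$ on the whole interval $[0,1]$ rather than just near $0$.
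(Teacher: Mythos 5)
Your proof is correct and takes essentially the same approach as the paper: the paper shows $f(x) = e^x\prn{1 - \nicefrac{x}{2}}$ is increasing on $[0,1]$ with $f(0)=1$, which is just the exponentiated form of your monotonicity argument for $h(a) = \ln(1-a/2) + a$.
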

\begin{proof}
Consider the function $f: [0,1] \to \Rp$, where $f(x) = e^x \prn{1 - \nicefrac{x}{2}}$. Clearly,
if $f(x) \geq 1$ for all $x \in [0,1]$, then the claim follows by taking the (natural) logarithm
of each side of the inequality, and setting $x = a_k$.

We have $\der{f(x)}{x} = \nicefrac{-e^x}{2} \prn{x-1} \geq 0$ for all $x \in [0,1]$. Therefore, $f$
is increasing in $[0,1]$, and thus attains its minimum for $x = 0$. Therefore, $f(x) \geq f(0) = 1$
for all $x \in [0,1]$ and the claim follows.
\end{proof}

\begin{claim}\label{clm:logs2}
For every $j \in \set{1, 2, \dots, k-1}$, we have
\[
\ln{\prn{1 - a_j + \frac{a^2_j}{2}}} \geq - a_j
\]
\end{claim}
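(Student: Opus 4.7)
The plan is to mirror the proof of Claim~\ref{clm:logs1}. Define $g : [0,1] \to \Rp$ by
\[
g(x) = e^x \prn{1 - x + \frac{x^2}{2}}.
\]
The key observation is that if $g(x) \geq 1$ throughout $[0,1]$, then taking natural logarithms gives $x + \ln\prn{1 - x + x^2/2} \geq 0$, and substituting $x = a_j$ (which is valid since $a_j \in [0,1]$) immediately yields the claim.

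To establish $g(x) \geq 1$, I would compute the derivative, which telescopes cleanly:
\[
\der{g(x)}{x} = e^x \prn{1 - x + \frac{x^2}{2}} + e^x \prn{-1 + x} = e^x \cdot \frac{x^2}{2} \geq 0
\]
for all $x \in [0,1]$. Hence $g$ is non-decreasing on $[0,1]$, and since $g(0) = 1$, we conclude $g(x) \geq 1$ on the entire interval, in direct parallel to the argument used for Claim~\ref{clm:logs1}.

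A minor bookkeeping point is to verify that the argument of the logarithm is strictly positive so that the logarithm is well-defined; this is immediate, since the quadratic $1 - x + x^2/2$ has discriminant $1 - 2 < 0$ and positive leading coefficient, so it is strictly positive on all of $\mathbb{R}$. I do not anticipate any real obstacle here: the inequality is essentially the standard bound $e^{-x} \leq 1 - x + x^2/2$ for $x \geq 0$, which follows from Taylor's theorem with Lagrange remainder, and the monotonicity-of-$g$ argument is merely a convenient repackaging that matches the style of Claim~\ref{clm:logs1}.
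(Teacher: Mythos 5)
Your proof is correct and matches the paper's argument essentially verbatim: both define $g(x) = e^x\prn{1 - x + \nicefrac{x^2}{2}}$, observe $\der{g(x)}{x} = \nicefrac{e^x x^2}{2} \geq 0$, and conclude $g(x) \geq g(0) = 1$ on $[0,1]$ before taking logarithms. The extra remark about positivity of the quadratic is a harmless (and correct) addition.
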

\begin{proof}
Fix an arbitrary $a_j$. Consider the function $g: [0,1] \to \Rp$, where
$g(x) = e^x \prn{1 - x + \nicefrac{x^2}{2}}$. Clearly, if $g(x) \geq 1$ for all $x \in [0,1]$, then
the claim follows by taking the (natural) logarithm of each side of the inequality, and setting
$x = a_j$.

We have $\der{g(x)}{x} = \nicefrac{e^x x^2}{2} \geq 0$ for all $x \in [0,1]$. Therefore, $g$
is increasing in $[0,1]$, and thus attains its minimum for $x = 0$. Therefore, $g(x) \geq g(0) = 1$
for all $x \in [0,1]$ and the claim follows.
\end{proof}

\subsection{Proof of Lemma~\ref{lem:rand-greedy}}

$\pi$ is clearly a randomized OCRS because every time it sees an element, it makes an irrevocable
decision to select it, if it is active, before it sees the next element, and also, by the choice of
$\cF_{\pi, x}$, it is easy to see that the set of elements it returns is always a singleton, and thus
feasible in $\cI$, since $\cF_{\pi, x} \subseteq \cI$. Furthermore, the choice of $\cF_{\pi, x}$ is
randomized, and thus $\pi$ is a randomized OCRS.

Next, it is also easy to see that $\pi$ is a greedy OCRS, because, given $x$, $\cF_{\pi, x}$ is a down-closed
subfamily of feasible sets and an active element $e$ is always selected if $\set{e} \in \cF_{\pi, x}$, since
there are no previously selected elements.

\subsection{An alternate proof of Theorem~\ref{thm:ocrs-e}}

The following scheme is due to Jan Vondr\'{a}k \cite{vondrak-personal}.

Let $\pi$ denote the OCRS we will create. $\pi$ will draw a random set $R$ where each element $e_i$
appears in $R$ independently with some probability $q_i$. Afterwards, it will set
\[
\cF_{\pi, x} = \set{\set{e_i} \midd e_i \in R}.
\]
We set $q_i = \nicefrac{1-e^{-x_i}}{x_i}$ for all $e_i \in \cN$. Afterwards, $\pi$ selects the first
element $e_i$ that is active and that $\set{e_i} \in \cF$.

The proof of the next lemma is identical to the proof of Lemma\ref{lem:rand-greedy}
\begin{lemma}\label{lem:rand-greedy2}
$\pi$ is a randomized greedy OCRS.
\end{lemma}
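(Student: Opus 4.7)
The plan is to verify, in the same spirit as the proof of Lemma~\ref{lem:rand-greedy}, the three defining features of a randomized greedy OCRS for $\pi$: (a) the random family $\cF_{\pi,x}$ is a down-closed subfamily of $\cI$; (b) $\pi$'s output is produced by greedy, online selection against $\cF_{\pi,x}$; and (c) $\cF_{\pi,x}$ is determined by a randomized procedure that is fixed before the adversarial arrival order is revealed.

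For (a), I would unpack the definition of $\cF_{\pi,x}$. Any $S=\{u_1,\dots,u_k\} \in \cF_{\pi,x}$ comes equipped with a witness $T=\{v_1,\dots,v_k\}\subseteq V$ and an assignment $u_j \in R_{v_j}\subseteq N(v_j)$ for every $j$. Since the $v_j$ are distinct, the pairs $(u_j,v_j)$ form a matching in $G$ covering $S$, so $S\in\cI$; hence $\cF_{\pi,x}\subseteq\cI$. Down-closedness is then immediate: for any $S'\subseteq S$, the sub-assignment obtained by restricting $T$ to the indices of elements of $S'$ is a valid witness for $S'$, so $S'\in\cF_{\pi,x}$.

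For (b) and (c), I would observe that all randomness of $\pi$ is concentrated in the independent draws of the sets $R_v$ for $v\in V$, performed up front with the prescribed marginals $q_u$; these draws determine $\cF_{\pi,x}$ entirely, so the family is indeed randomized and set before any element arrives. During the online process, when an element $u$ arrives active, the acceptance rule is exactly $S + u \in \cF_{\pi,x}$, where $S$ is the currently selected set. Checking this predicate requires only $S$, $u$, and the pre-sampled $R_v$'s, and in particular no knowledge of future arrivals, so the decision is irrevocable and online. Together with (a), this matches the definition of a randomized greedy OCRS.

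The only non-mechanical point is the down-closure argument in (a), which I expect to be the main (but still easy) obstacle since the definition of $\cF_{\pi,x}$ is phrased existentially via a witness $T$; the restriction-of-witness trick handles it directly. Everything else is a direct unwinding of definitions, as in Lemma~\ref{lem:rand-greedy}.
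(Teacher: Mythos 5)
Your argument is aimed at the wrong scheme. Lemma~\ref{lem:rand-greedy2} is stated in the appendix for Vondr\'{a}k's alternate single-item scheme: there $\pi$ draws a single random set $R$ with marginals $q_i = \nicefrac{1-e^{-x_i}}{x_i}$ and sets $\cF_{\pi,x} = \{\{e_i\} : e_i \in R\}$; there are no per-vertex sets $R_v$, no neighborhoods $N(v)$, and no matching witnesses. What you have written --- the witness set $T \subseteq V$, the pairs $(u_j, v_j)$ forming a matching that covers $S$, and the restriction-of-witness argument for down-closure --- is the verification needed for the transversal-matroid scheme of Section~\ref{sec:extension}, i.e.\ for Lemma~\ref{lem:rand-greedy3}, not for the lemma at hand. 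As written, your proof therefore does not establish the stated claim about this $\pi$.

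For the lemma actually stated, the paper simply reuses the proof of Lemma~\ref{lem:rand-greedy}, and the same short argument is all that is needed here: the feasible family consists of the singletons $\{e_i\}$ with $e_i$ in the pre-sampled set $R$ (together with $\emptyset$, so it is down-closed), hence every output is a singleton and lies in $\cI$; the choice of $\cF_{\pi,x}$ is randomized via $R$ and fixed before any element arrives; and upon each arrival the scheme irrevocably accepts an active $e_i$ exactly when $\{e_i\} \in \cF_{\pi,x}$ and nothing has yet been selected, which is precisely greedy selection against $\cF_{\pi,x}$. Your general outline (down-closedness, feasibility, online greedy rule, randomness fixed up front) is the right template, and your witness-restriction observation is a correct and genuinely useful detail for Lemma~\ref{lem:rand-greedy3}, where the paper leaves it implicit --- but it belongs there, not here.
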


Next, we quantify the probability that each element is selected by $\pi$, given that it is active.

\begin{lemma}\label{lem:selectability2}
$\pi$ selects every element $e_i \in \cN$, given that it is active, with probability at least $\nicefrac{1}{e}$.
\end{lemma}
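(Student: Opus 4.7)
The plan is to carry out a direct computation of the selection probability for each element under Vondr\'ak's scheme, and then reduce the inequality to a well-known convexity estimate for the exponential function.

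First, I would relabel the elements so that $e_i$ arrives in the $i$-th step. Fix an element $e_i$ and condition on $e_i$ being active. Since $\pi$ is greedy with feasibility family $\cF_{\pi,x} = \{\{e_j\} : e_j \in R\}$, the event that $e_i$ is selected is the intersection of the event $\{e_i \in R\}$, which has probability $q_i$, and the event that no earlier element $e_j$ is simultaneously active and in $R$. The events across elements are independent, and for each $j < i$ the ``bad'' event occurs with probability $x_j q_j = x_j \cdot \frac{1-e^{-x_j}}{x_j} = 1 - e^{-x_j}$. Hence the conditional selection probability equals
\[
r_i \;=\; q_i \prod_{j=1}^{i-1} \prn{1 - x_j q_j} \;=\; \frac{1 - e^{-x_i}}{x_i}\,\exp\!\prn{-\sum_{j=1}^{i-1} x_j}.
\]

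Next, I would use the feasibility constraint $\sum_{j=1}^n x_j \le 1$, which gives $\sum_{j=1}^{i-1} x_j \le 1 - x_i$. Substituting this bound and clearing denominators reduces the desired inequality $r_i \ge 1/e$ to
\[
\frac{1 - e^{-x_i}}{x_i} \;\ge\; e^{-x_i},
\]
or equivalently $e^{x_i} \ge 1 + x_i$, which is the standard tangent-line inequality for the exponential and holds for all $x_i \in [0,1]$. Chaining the two bounds yields $r_i \ge 1/e$ for every $i$.

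I don't anticipate a real obstacle here: the nice choice $q_i = (1 - e^{-x_i})/x_i$ is engineered so that the ``survival'' probability of each earlier element factors cleanly as $e^{-x_j}$, collapsing the product into a single exponential. The only conceptual step is recognizing that the remaining one-variable inequality is just $e^{x_i} \ge 1 + x_i$, so no Taylor expansion or auxiliary lemma in the style of Lemma~\ref{lem:logs} is required for this alternate scheme.
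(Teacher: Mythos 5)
Your proposal is correct and follows essentially the same route as the paper: the same product formula $r_i = \frac{1-e^{-x_i}}{x_i} e^{-\sum_{j<i} x_j}$ and the same use of $\sum_j x_j \le 1$. The only (minor, and arguably cleaner) difference is the finish: you reduce the one-variable inequality to the tangent-line bound $e^{x_i} \ge 1 + x_i$, whereas the paper bounds $r_i \ge \frac{e^{x_i-1}-e^{-1}}{x_i}$ and invokes, without proof, that this expression is minimized as $x_i \to 0$.
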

\begin{proof}
We relabel the elements of $\cN$ so that each $e_i$ arrives in the $i$-th step. Consider an element
$e_i \in \cN$. Given that $e_i$ is active, since $\pi$ is a greedy OCRS, $\pi$ will select $e_i$ if and only
if it has not selected any elements before $e_i$ and also $\set{e_i} \in \cF_{\pi, x}$. Recall that we
have $\set{e_i} \in \cF_{\pi, x}$ with probability exactly $q_i = \nicefrac{1-e^{-x_i}}{x_i}$. Furthermore, for
every element $e_j$ where $j < i$, it needs to be the case that we avoid having both
$\set{e_j} \in \cF_{\pi, x}$ and also $e_j$ coming up active. This happens with probability
$1 - x_j \cdot \nicefrac{1-e^{-x_j}}{x_j} = e^{-x_j}$ for every $e_j$ where $j < i$. Overall, if we denote by $r_i$ the
probability that $e_i$ is selected by $\pi$, given that it is active, we have
\[
r_i = \frac{1-e^{-x_i}}{x_i} \cdot \prod_{j < i} {e^{-x_j}} = \frac{1-e^{-x_i}}{x_i} \cdot e^{-\sum_{j < i} {x_j}} \geq \frac{\prn{1-e^{-x_i}} e^{x_i-1}}{x_i} = \frac{e^{x_i-1}-e^{-1}}{x_i},
\]
where the inequality follows from $\sum_i {x_i} \leq 1$. This expression is minimized for $x_i \to 0$, and thus we get $r_i \geq \nicefrac{1}{e}$, for all $i \in \cN$.
\end{proof}

From Lemmas~\ref{lem:rand-greedy2} and \ref{lem:selectability2}, it follows that $\pi$ is a
$\nicefrac{1}{e}$-selectable (randomized) greedy OCRS for $\cP$.

\begin{remark}
One can easily see that the difference between the two proofs is that, in our scheme, the probability of selection
$q_i$ of each element $i \in \cN$ is a linear approximation of the selection probability of Vondr\'{a}k's scheme.
The result then follows due to the convexity of the selection probability $q_i = \nicefrac{1 - e^{-x_i}}{x_i}$ of
Vondr\'{a}k's scheme.
\end{remark}

\subsection{Proof of Lemma~\ref{thm:ocrs-e-transversal}}

Consider an active element $u \in U$. Since $\pi$ is a greedy OCRS, it will select $u$ if and only if there
exists a neighbor $v$ of $u$ such that $u \in R_v$, and also, together with the set $S$ of elements already
selected by $\pi$, $S + u \in \cF$. First, for every element $w \in U$, let $\cE_w$ denote the event that
there exists an element $v \in V$ such that $w \in R_v$. In other words, $\cE_w$ is the event that $w$ is in
some set of $\cF$. For $\cE_u$, we have
\[
\Pr[\cE_u] = 1 - \prod_{v \in N(u)} {\prn{1 - q_u}} = 1 - \prn{1 - q_u}^{|N(u)|} = \frac{1 - e^{-x_u}}{x_u}.
\]
Furthermore, the set $S$ selected prior to seeing $u$ has to be independent, thus $S \in \cF$, and thus
for $S + u \notin \cF$, it has to be that for every $v \in N(u)$, we have $|S \cap R_v| \geq 1$. Therefore,
the probability that $S + u \notin \cF$ is
\begin{align*}
\Pr\brk{S + u \notin \cF \midd S} &= \prod_{v \in N(u)} {\prn{1 - \prod_\stack{u' \in N(v)}{u' \neq u} {\prn{1 - x_u' \Pr[\cE_{u'}]}}}} \\
&= \prod_{v \in N(u)} {\prn{1 - \prod_\stack{u' \in N(v)}{u' \neq u} {\prn{1 - x_u' \frac{1 - e^{-x_{u'}}}{x_{u'}}}}}} \\
&= \prod_{v \in N(u)} {\prn{1 - \prod_\stack{u' \in N(v)}{u' \neq u} {e^{-x_{u'}}}}} \\
&= \prod_{v \in N(u)} {\prn{1 - e^{- \sum_{u' \in N(v) : u' \neq u} {x_{u'}}}}} \\
&\leq \prod_{v \in N(u)} {\prn{1 - e^{- 1 + x_u}}} \\
&= \prn{1 - e^{- 1 + x_u}}^{|N(u)|},
\end{align*}
where the inequality follows from the fact that, for every $v \in V$, $\sum_{w \in N(v)} {x_w} \leq 1$ due to
$x \in \cP$. Therefore, we have
\begin{align*}
\Pr[u \in \pi(R) | u \in R] &= \Pr[\cE_u] \cdot \Pr\brk{S + u \in \cF \midd S} \\
&\geq \frac{1 - e^{-x_u}}{x_u} \prn{1 - \prn{1 - e^{- 1 + x_u}}^{|N(u)|}}.
\end{align*}
Let $f_k(x) = \nicefrac{1 - e^{-x}}{x} \prn{1 - \prn{1 - e^{- 1 + x}}^{k}}$, for $k \geq 1$ and $x \in [0,1]$.
It is easy to see that $f_k(x) \geq \nicefrac{1}{e}$ for every $k \geq 1$ and $x \in [0,1]$. Furthermore, we have
that for $k \geq 3$, $f_k(x)$ is minimized in $[0,1]$ for $x = 1$, and yields $f_k(1) = 1 - \nicefrac{1}{e}$.

\subsection{Proof of Lemma~\ref{lem:greedy-lower-bound}}

Assume towards contradiction, that
\[
\min_{e \in \cN} \set{\sum_{k = 1}^n {\prn{1 - \frac{1}{n}}^{k - 1} \sum_{\stack{S \subseteq \cN \: : \: |S| = k}{e \in S}} {\alpha_S}}} > \prn{1 - \frac{1}{n}}^{n-1}.
\]
The proof consists of a double counting argument. First, notice that, by the inequality above, we have
\begin{equation}\label{eq:greedy-contr1}
\sum_{e \in \cN} \prn{\sum_{k = 1}^n {\prn{1 - \frac{1}{n}}^{k - 1} \sum_{\stack{S \subseteq \cN \: : \: |S| = k}{e \in S}} {\alpha_S}}} > n \prn{1 - \frac{1}{n}}^{n-1}.
\end{equation}
For any $0 \leq k \leq n$, let $\beta_k = \sum_{S \subseteq \cN \; : \; |S| = k} {\alpha_S}$ be the
total probability mass assigned by the greedy OCRS to all sets of size $k$, and notice that
$\sum_{k = 0}^n {\beta_k} = 1$. We can also compute the left-hand side of \eqref{eq:greedy-contr1} as
\begin{align}
\sum_{e \in \cN} \prn{\sum_{k = 1}^n {\prn{1 - \frac{1}{n}}^{k - 1} \sum_{\stack{S \subseteq \cN \: : \: |S| = k}{e \in S}} {\alpha_S}}} &= \sum_{k = 1}^n \prn{\prn{1 - \frac{1}{n}}^{k - 1} \sum_{e \in \cN} {\sum_{\stack{S \subseteq \cN \: : \: |S| = k}{e \in S}} {\alpha_S}}} \nonumber \\
&= \sum_{k = 1}^n \prn{k \prn{1 - \frac{1}{n}}^{k - 1} \sum_{S \subseteq \cN \; : \; |S| = k} {\alpha_S}} \nonumber \\
\label{eq:greedy-contr2} &= \sum_{k = 1}^n \prn{\beta_k \cdot k \prn{1 - \frac{1}{n}}^{k - 1}}.
\end{align}
where the second equality follows from the fact that in the double sum, for every $S$ such that $|S| = k$,
every coefficient $a_S$ appears exactly $k$ times, one for each element it contains. Under the constraint $\sum_{k = 0}^n {\beta_k} = 1$, we have that $\sum_{k = 1}^n \prn{\beta_k \cdot k \prn{1 - \nicefrac{1}{n}}^{k - 1}}$ is maximized for $\beta_n = 1$ and $\beta_m = 0$ for all $m < n$, as
$k \prn{1 - \nicefrac{1}{n}}^{k - 1}$ is strictly increasing in $k$. Therefore,
\begin{equation}\label{eq:greedy-contr3}
\sum_{e \in \cN} \prn{\sum_{k = 1}^n {\prn{1 - \frac{1}{n}}^{k - 1} \sum_{\stack{S \subseteq \cN \: : \: |S| = k}{e \in S}} {\alpha_S}}} \leq n \prn{1 - \frac{1}{n}}^{n-1}.
\end{equation}
Together, \eqref{eq:greedy-contr1} and \eqref{eq:greedy-contr3} yield a contradiction.

\end{document}